\documentclass[11pt]{amsart}

\usepackage[usenames]{color}

\usepackage{amsmath}%,times,pslatex,enumitem}
\usepackage{amsthm}
\usepackage{amsfonts}
\usepackage{amssymb}
\usepackage{latexsym}
\usepackage{color}
\usepackage{graphicx}
\usepackage{float}

\usepackage{blindtext}

\usepackage{hyperref}
\usepackage[top=1in, bottom=1.5in, left=1in, right=1in]{geometry}

\newtheorem{definition}{Definition}
\newtheorem{proposition}{Proposition}
\newtheorem{remark}{Remark}

\newcommand{\Z}{\mathbb{Z}}

\newcommand{\F}{\mathbb{F}}

\begin{document}

\author[]{Nael Rahman}
\address{Department of Mathematics, The City College of New York, New York,
NY 10031} \email{naelrahman@gmail.com}

\author[]{Vladimir Shpilrain}
\address{Department of Mathematics, The City  College  of New York, New York,
NY 10031} \email{vshpilrain@ccny.cuny.edu}

\title{MAKE: \lowercase{a \uppercase{M}atrix \uppercase{A}ction \uppercase{K}ey \uppercase{E}xchange}}

%
%\title[Public key exchange using matrix action]
%{MAKE: matrix action key exchange}

\begin{abstract}
We offer a public key exchange protocol based on a semidirect product of two cyclic (semi)groups of matrices over $\Z_p$.
One of the (semi)groups is additive, the other one multiplicative. This allows us to take advantage of both operations on matrices to diffuse information. We note that in our protocol, no power of any matrix or of any element of $\Z_p$ is ever exposed, so all standard attacks on Diffie-Hellman-like protocols  are not applicable.

\end{abstract}

\maketitle

\section{Introduction}

We start by recalling the classical Diffie-Hellman protocol \cite{DH}.
The simplest, and original, implementation of this protocol uses the
multiplicative group of integers modulo $p$, where $p$ is prime and
$g$ is primitive modulo $p$. A more general description of the
protocol uses an arbitrary finite cyclic group:

\begin{enumerate}

\item Alice and Bob agree on a finite cyclic group $G$ of order $q$ and a generating element $g$ in $G$.
 We will write the group $G$ multiplicatively.

\item Alice picks a random natural number $m<q$ and sends $g^m$ to Bob.

\item    Bob picks a random natural number $n<q$ and sends $g^n$ to Alice.

\item   Alice computes $K_A=(g^n)^m=g^{nm}$.

\item  Bob computes $K_B=(g^m)^n=g^{mn}$.
\end{enumerate}

Since $mn=nm$, both Alice and Bob are now in possession of the same
group element $K=K_A= K_B$ which can serve as the shared secret key.

The protocol is considered secure against eavesdroppers if $G$ and
$g$ are chosen properly. The eavesdropper must solve the {\it
Diffie-Hellman problem} (recover $g^{mn}$ from $g$, $g^m$, and $g^n$)
to obtain the shared secret key. This is currently considered
difficult for a ``good" choice of parameters (see e.g.
\cite{Menezes} for details).

In \cite{Habeeb},  a new  key exchange protocol was offered, based on a semidirect product of multiplicative matrix semigroups. That protocol is similar to the Diffie-Hellman protocol, but it differs in one essential
detail: at the last two steps, Alice and Bob use multiplication instead of exponentiation. In other words, the equality $K_A= K_B$ in that case is based on the identity $m+n=n+m$ in the ring of integers, instead of the
identity $mn=nm$. Also,  even though the parties do compute a large power of a public element (as in the classical Diffie-Hellman protocol), they do not transmit the whole result, but
rather just part of it, so no power of a public element is ever exposed.

The generic protocol in \cite{Habeeb} can be based on any (semi)group, in particular on any non-commutative (semi)group. For a particular instantiations of the generic protocol, the authors of \cite{Habeeb}
used conjugation as an action of the multiplicative semigroup of matrices on itself, and matrices were considered over the group ring $\Z_7[A_5]$. A ring of such a small size was selected to enhance efficiency; however, because of this small size,  the protocol turned out to be vulnerable to a ``linear algebra attack", similar to the
attack on Stickel's protocol \cite{Stickel_self} offered in \cite{Stickel}, albeit more sophisticated, see \cite{MR}, \cite{R}.

In this paper, we use matrices over $\Z_p$ for a large $p$. The size of the exponents used in the protocol should guarantee security against brute force attacks. We also note that in our protocol, no power of any matrix is ever exposed, so all standard attacks on Diffie-Hellman-like protocols (including Shor's quantum algorithm attack \cite{Shor}) are not applicable.

\section{Semidirect products and extensions  by automorphisms}
\label{Semidirect}

We include this section to make the exposition more comprehensive.
The reader who is uncomfortable with group-theoretic constructions
can skip to subsection \ref{holomorph}.

We now recall the definition of a semidirect product:

\begin{definition} Let $G, S$ be two groups, let $Aut(G)$ be the group of automorphisms of $G$,
and let $\rho: S \rightarrow Aut(G)$ be a homomorphism. Then the
semidirect product of $G$ and $S$ is the set
$$\Gamma = G \rtimes_{\rho} S = \left \{ (g, h): g \in G, ~h \in S \right \}$$
with the group operation given by\\
%$\Gamma = H \rtimes_{\phi} Q$, where $\phi:Q \rightarrow Aut(H)$ is an
\centerline{$(g, h)(g', h')=(g^{\rho(h')} \cdot  g', ~h \cdot h')$.}\\
Here $g^{\rho(h')}$ denotes the image of  $g$ under the automorphism
$\rho(h')$, and when we write a product $h \cdot h'$ of two
morphisms, this means that $h$ is applied first.
\end{definition}

In this paper, we focus on a special case of this construction,
where the group $S$ is just a sub(semi)group of the semigroup of self-homomorphisms of $G$, typically a cyclic sub(semi)group. We give more details about
this special case in Section \ref{holomorph} below.

\subsection{Extensions  by automorphisms} \label{holomorph}

A particularly simple special case of the semidirect product
construction is where the group $S$ is just a subgroup of the group
$Aut(G)$ of automorphisms of $G$. Then the semidirect product is the set of all pairs $(g,
~\phi)$, where $g \in G, ~\phi \in S$, with the group operation
given by

\begin{align} \label{phi}
(g,~\phi)\cdot  (g',~\phi') = (\phi'(g)\cdot g',~\phi
\cdot \phi').
\end{align}

One can also use this construction if $G$ is not necessarily a
group, but just a semigroup, and/or consider endomorphisms (i.e.,
self-homomorphisms) of $G$, not necessarily automorphisms. Then the
resulting  semidirect product will be a semigroup, not a group, but this is quite sufficient for being the  platform of a Diffie-Hellman-like key exchange protocol.

\section{Action} \label{Action}

Our platform semigroup will be a semidirect product of two semigroups of matrices over $\Z_p$. In the notation of Section \ref{holomorph}, the semigroup $G$ will be  the {\it additive} semigroup of all matrices $\Z_p$, and the semigroup $S$ will be a cyclic {\it multiplicative}  semigroup of matrices. More specifically, the semigroup $S$ will consists of {\it pairs} of matrices over $\Z_p$, of the form $(H_1^k, H_2^k)$ for some fixed matrices $H_1, H_2$ and all positive exponents $k$.

The action of $S$ on $G$ will be as follows: $M^(S_1, S_2)=S_1MS_2$ for $M \in G, ~(S_1, S_2) \in S$. Note that this is an action only if we restrict to a {\it commutative} (in particular, cyclic) (semi)group  $S$ generated by a pair of matrices $(H_1, H_2)$.

Thus, in the semidirect product of the additive semigroup of matrices with the multiplicative cyclic semigroup generated by $(H_1, H_2)$, the multiplication looks like this (cf. formula (\ref{phi})):
\smallskip

\noindent $(M, (H_1, H_2))^2 = (M, (H_1, H_2)) \cdot (M, (H_1, H_2)) = (H_1MH_2+M, (H_1^2, H_2^2))$;\\
$(M, (H_1, H_2))^3 = (H_1MH_2+M, (H_1^2, H_2^2)) \cdot (M, (H_1, H_2)) = (H_1^2MH_2^2+H_1MH_2+M, (H_1^3, H_2^3))$, etc.

\section{Protocol description}\label{Protocol}

Below is the protocol description. Parameters are discussed separately, in Section \ref{Parameters}.
%We note that $(H_1, H_2)^k= (H_1^k, H_2^k)$.
\smallskip

\begin{itemize}

\item[1.] ({\it key selection}) {\bf (i)} Alice and Bob agree on three public matrices, $M$, $H_1$,
and $H_2$, over $\Z_p$, such that $MH_i \ne H_iM$ and such that $det(H_1) = det(H_2)=0$.

\smallskip

\noindent {\bf (ii)} Alice selects a private integer $m$ and Bob selects a private integer $n$.

\smallskip

\item[2.]   Alice computes $(M, (H_1, H_2))^m$ and sends {\bf only the first component} (call it $A$) of the
result to Bob.

\smallskip

\item[3.]   Bob computes $(M, (H_1, H_2))^n$ and sends {\bf only the first component} (call it $B$) of the
result to Alice.

\smallskip

\item[4.]  Alice computes $(B, x) \cdot (A, ~(H_1, H_2)^m) = (H_1^m B H_2^m +A, ?)$. Her key is now $K_A = H_1^m B H_2^m +A$.
\smallskip

\item[5.]   Bob computes $(A, y) \cdot (B, ~(H_1, H_2)^n) = (H_1^n A H_2^n +B, ?)$.  His key is now $K_B = H_1^n A H_2^n +B$.

\smallskip

\item[6.]  Since $(M, (H_1, H_2))^{m+n} = (B, x) \cdot (A, ~(H_1, H_2)^m) = (A, ~y) \cdot (B, ~(H_1, H_2)^n) =
(K, ~(H_1, H_2)^{m+n})$, we should have $K_A = K_B = K$, the shared secret
key.

\end{itemize}

\begin{remark}
Note that, in contrast with the ``standard" Diffie-Hellman key
exchange, correctness here is based on the equality $x^{m}\cdot
x^{n} = x^{n} \cdot x^{m} =  x^{m+n}$  rather  than on the equality
$(x^{m})^{n} = (x^{n})^{m} = x^{mn}$. In  the ``standard"
Diffie-Hellman set up, our trick would not work because, if the
shared key $K$ was just the product of two  openly transmitted
elements, then anybody, including the eavesdropper, could compute
$K$.
\end{remark}

\section{Parameters and keys sampling}
\label{Parameters}

The basic parameter $p$ should have the same properties  and the same magnitude as recommended for the classical Diffie-Hellman protocol. In particular, $p$ should be a {\it safe prime}, i.e., a prime of the form  $p=2q+1$.

The exponents $m$ and $n$ should be of the same magnitude as $q=\frac{p-1}{2}$.
%The same applies to the choice of exponents $m$ and $n$.

For the public matrices $M$, $H_1$,
and $H_2$, we require that $MH_i \ne H_iM$ and $det(H_1) = det(H_2)=0$. Sampling of non-commuting matrices can be done as follows. First build a matrix $M$ with entries selected uniformly at random from $\Z_p$. %Check if it is invertible; with high probability it will be. If not, then start over. Then do the same for the matrices $H_i$. Then check if $MH_i \ne H_iM$;  with high probability this will be the case. If not, then start the whole sampling over.

Then, the matrix $H_1$ is built as follows. First, make a diagonal matrix $D$: put 0 in the upper left corner, and random nonzero elements of $\Z_p$ in the remaining places on the diagonal. Then make sure that these random elements do not have order 2. If they do not, then, since $p-1=2q$, by Lagrange's theorem their order is at least $q$,  i.e., is large. Now select a matrix $S$ all of whose entries are random elements of $\Z_p$. With high probability, this matrix will be invertible. If not, then change one of the entries of $S$. If $S$ is invertible, let $H_1=S^{-1}DS$. 

To build $H_2$, use the same procedure but with fresh randomness throughout.

Finally, check that $MH_i \ne H_iM$; this will be the case with high probability. If this is not the case for one of the $H_i$, redo selection of $M$.

%\medskip

\section{Security}
\label{Security}

In this section, we address security of the protocol described in Section \ref{Protocol}.

Our security assumption here, analogous to the {\it computational Diffie-Hellman assumption}, is that it is computationally infeasible
to retrieve the shared secret key $K$ from the 5 public matrices $(M, H_1, H_2, A, B)$. The matrices $A$ and $B$ are expressed in terms of the matrices $M$, $H_1$, and $H$ as follows:

\begin{align} \label{A}
A =  H_1^{m-1}MH_2^{m-1} + H_1^{m-2}MH_2^{m-2}  + \ldots + H_1MH_2 + M.
\end{align}

\begin{align} \label{B}
B =  H_1^{n-1}MH_2^{n-1} + H_1^{n-2}MH_2^{n-2}  + \ldots + H_1MH_2 + M.
\end{align}

The shared secret key $K$ is

\begin{align} \label{K}
K =  H_1^{n+m-1}MH_2^{n+m-1} + H_1^{n+m-2}MH_2^{n+m-2}  + \ldots + H_1MH_2 + M.
\end{align}

What makes our scheme compare favorably to, say, the scheme of \cite{Habeeb} is that in computing $A$ and $B$, both operations on matrices (addition and multiplication) are employed, which is good for security because neither multiplicative (e.g. the determinant) nor additive (e.g. the trace) functions of a matrix can be used to reduce the problem to a problem in $\Z_p$.

Also, the expressions for $A$ and $B$ cannot be factored into products of simpler expressions.
To compare, factoring would be possible if the action of (a single matrix) $H$ on $M$ was a one-sided multiplication. For example, if the action of $H$ on $M$ was given by $M \cdot H$, then the matrix $A$ would be equal to $MH^m + MH^{m-1}  + \ldots + MH + M = M(H^m + H^{m-1}  + \ldots + H + I)$. The adversary could multiply this on the right by $(H-I)$ and get $M(H^{m+1}-I)$.  Since the matrices $M$ and $H$ are public, the adversary could then recover $MH^{m+1}$ or even $H^{m+1}$ if $M$ is  invertible. Then, if  $H$ is  invertible, too, the adversary can recover  $H^{m}$, and this breaks the scheme.

However, with the action $M \to H_1MH_2$, no factorization of $A$ (or $B$) is possible, and there is no visible way to recover $K$ from $(M, H_1, H_2, A, B)$.

\subsection{Discrete log problem in $\Z_p$ as a special case}\label{log}

We note again that no power of any matrix is ever exposed, so all standard attacks on Diffie-Hellman-like protocols are not applicable in our situation. We also note that the analog of the discrete logarithm problem for our protocol is at least as hard as it is for the classical Diffie-Hellman protocol.

\begin{proposition}\label{reduction}
Let a prime $p$ be of the form $p=4n+3$. (In particular, safe primes have this property.) Suppose it is computationally feasible for the adversary to recover, from the $3 \times 3$ public matrices $A$, $M$, $H_1$, and  $H_2$ over $\Z_p$, the private exponent  $m$ in the protocol in Section \ref{Protocol}. Then it is computationally feasible for the adversary to recover the private exponent  $k$ from $g$ and $g^k$ in the classical Diffie-Hellman protocol.
\end{proposition}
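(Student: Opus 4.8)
I want to embed a discrete-log instance $(g, g^k)$ over $\Z_p$ into an instance of the MAKE protocol so that recovering the MAKE exponent $m$ recovers $k$. The natural device is to make the matrices $H_1, H_2$ essentially diagonal, so that the sums defining $A$ collapse to geometric-series expressions in the eigenvalues, which are just elements of $\Z_p$. Concretely, I would take $H_1 = H_2 = D$ where $D = \mathrm{diag}(0, g, g^{-1})$ (this respects the protocol's requirement $\det(H_i) = 0$), and pick $M$ to be a matrix that does not commute with $D$ but has a nonzero $(2,2)$ entry, say $M = \mathbf{all\text{-}ones}$ or something with $M_{22} \ne 0$. Then, reading off formula (\ref{A}), the $(2,2)$ entry of $A$ is
\[
A_{22} = M_{22}\bigl(g^{2(m-1)} + g^{2(m-2)} + \cdots + g^2 + 1\bigr) = M_{22}\cdot\frac{g^{2m} - 1}{g^2 - 1},
\]
provided $g^2 \ne 1$, which holds since $g$ has large order. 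So from $A_{22}$ the adversary learns $g^{2m} \bmod p$.

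The remaining issue is to pass from $g^{2m}$ to $g^m$, i.e. to extract a square root in the cyclic group $\langle g\rangle$. This is exactly where the hypothesis $p = 4n+3$ (equivalently $p \equiv 3 \Mod 4$) enters: in such a field, $-1$ is a non-residue, and square roots of a quadratic residue $r$ are computed by the closed formula $\pm r^{(p+1)/4} \bmod p$. Since $g^{2m}$ is manifestly a square, one of its two square roots equals $g^m$; the two candidates are $g^m$ and $-g^m = g^{m + q}$ where $q = (p-1)/2$. To decide which is correct (and more importantly to see that it doesn't matter), note that the adversary is assumed to recover the exponent $m$ itself from $(A, M, H_1, H_2)$; so I would instead argue in the reverse direction. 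Given a DH instance $(g, g^k)$ with $1 \le k < q$, I would set up the MAKE instance above with the \emph{same} public data and feed $A$ (constructed so that $A_{22} = M_{22}(g^{2k}-1)/(g^2-1)$, which requires only $g$ and $g^k$, not $k$) to the hypothesized MAKE-solver; it returns $m$, and then $k \equiv m \Mod{\mathrm{ord}(g)}$ — but actually I must be careful that the solver returns the specific $m$ I used and not merely something congruent to it. This is handled by the standard convention that exponents are reduced modulo the order; taking $g$ primitive mod $p$ of order $p-1$ and $k$ in the canonical range makes $m = k$.

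I would organize the write-up as: (1) fix the embedding $H_1 = H_2 = \mathrm{diag}(0,g,g^{-1})$ with $g$ primitive, $M$ with $M_{22} \ne 0$ and $MD \ne DM$, verifying the protocol's parameter constraints; (2) compute $A_{22}$ via (\ref{A}) and the geometric series identity, noting $g^2 - 1 \ne 0$; (3) observe the reduction runs in polynomial time given $g$ and $g^k$, since building $A$ needs only $g^{2k} = (g^k)^2$ and a matrix inverse / scalar division mod $p$; (4) invoke the hypothesized solver on this $A$ to obtain $m$, and conclude $k = m$.

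**Main obstacle.** The genuine subtlety is item (3)–(4): the MAKE-solver is a black box promised to return \emph{the} private exponent, but the protocol's correctness only uses the exponent modulo the order of the cyclic semigroup generated by $(H_1, H_2)$, and with $\det H_i = 0$ the semigroup $\langle (D, D)\rangle$ is not a group and its "order" is delicate (powers of $D$ eventually stabilize on the rank-2 part but the $0$ eigenvalue is idempotent). I expect the role of $p = 4n+3$ is precisely to launder the ambiguity: the map $m \mapsto g^{2m}$ is two-to-one onto its image, but combined with the square-root formula $r^{(p+1)/4}$ one recovers a canonical preimage, and pinning $g$ to be primitive and $k < q$ removes the last sign ambiguity since $g^m$ and $-g^m$ have exponents differing by $q$, only one of which lies below $q$. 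Making this bookkeeping airtight — matching "the $m$ the adversary outputs" with "the $k$ we planted" — is the step I would spend the most care on; everything else is the routine geometric-series computation.
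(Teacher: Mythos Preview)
Your diagonal embedding is the right idea and matches the paper's, but there is a concrete gap: you must hand the solver the \emph{entire} matrix $A$, not just $A_{22}$, and with your $D=\mathrm{diag}(0,g,g^{-1})$ the off-diagonal entries of the lower $2\times 2$ block are
\[
A_{23} \;=\; \sum_{i=0}^{m-1} g^{i}\, M_{23}\, g^{-i} \;=\; m\cdot M_{23}, \qquad A_{32} \;=\; m\cdot M_{32}.
\]
So with $M$ the all-ones matrix (your suggested example) you cannot write down $A$ without already knowing $m$, and step~(3) of your plan (``building $A$ needs only $g^{2k}$'') fails. This is repairable---force $M_{23}=M_{32}=0$ and secure non-commutation via a nonzero entry in the first row or column---but it is precisely the step you dismissed as ``routine,'' and it deserves the care you reserved for the exponent bookkeeping.

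Relatedly, your account of where the hypothesis $p=4n+3$ enters does not match the paper, and in fact your forward reduction does not use it at all: once you plant $m=k$ and build $A$ from $(g^k)^2$, no square root is ever taken. The paper's argument runs in the opposite direction. It does not try to construct a consistent $A$ from $g^k$; instead it drops $g^k$ directly into one entry of $A$, fills the remaining $m$-dependent entries with fresh random values (arguing that their distributions match those of a genuine instance), and then, after the solver outputs $m$, recovers $k$ from a relation that is \emph{quadratic} in $m$. That quadratic over $\Z_p$ may have no root; the condition $p\equiv 3 \Mod 4$ (so $-1$ is a non-residue) is what guarantees that one of two parallel runs---on $g^k$ and on $g^{r}(g^k)^{-1}$ for a fixed $r$---succeeds. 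The ambiguity you worry about, the two square roots of $g^{2m}$, is a different one and in your setup is resolved trivially by testing both candidates $m$ and $m+(p-1)/2$ against the given $g^k$.
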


\begin{proof}
In the notation we used for the  Diffie-Hellman protocol in the Introduction, suppose we are given $g$ and $g^m$, for some $g \in \Z_p$ and $m \in Z$. Make a $3 \times 3$ matrix $M$ with $a_{11}= g$; $a_{22}$, $a_{23}$,  and $a_{33}$ random. Let the matrix $H=H_1=H_2$ be a diagonal matrix, with  $h_{22}=1$; $h_{11}$ random, and $h_{33}=0$.

Then, for any $i\ge 0$, the matrix $H^i$ will have $h_{11}^i$, 1, and $0$ on the diagonal. The matrix $H^iMH^i$ will have $g^{2i}$ in the upper left corner, and zeros as other entries in the first row and in the first column. Therefore, the matrix $A$ (see the formula (\ref{A})), which is the sum of $H^iMH^i$, will have
$g^{2\sum i}$ in the upper left corner, and zeros as other entries in the first row and in the first column. The summation in the exponent on $g$ is from $i=0$ to $i=m-1$ and the whole exponent on $g$ is therefore equal to $(m-1)(m-2)$. Clearly, one knows $m$ if and only if one knows $(m-1)(m-2)$. Other entries of the matrix $A$ are as follows. The entry (2,2) is $m \cdot a_{22}$, i.e., it is random since $a_{22}$ is random. Similarly, the entry (2,3) is $m \cdot a_{23}$, i.e., is random. The entry (3,2) is 0. Finally, the entry (3,3) is $a_{33} \cdot  \sum h_{33}^j$, where $j$ in the exponent on $h_{33}$ runs from $j=0$ to $j=m-1$. Again, this is  random since $a_{33}$ is random.

Thus, the matrix $A$ has $g^{(m-1)(m-2)}$ in the upper left corner, and other entries are either zeros or random. Hence our strategy for recovering $k$ from $g$ and $g^k$ will be as follows. Put $g^k$ in the upper left corner of $A$, and make other entries either zeros or random, according to what we wrote in the previous paragraph.
Then recover $m$, if possible, using an algorithm that is assumed to exist by the Proposition hypothesis. We say ``if possible" because $k$ may not have the form $(m-1)(m-2)$, in which case the algorithm may fail.

Re-write $(m-1)(m-2)$ as $(m - \frac{3}{2})^2 - \frac{9}{4}$. Here $\frac{3}{2}$ means $3 \cdot 2^{-1}$. The inverse of 2 exists since $p \ne 2$. Similarly, $\frac{9}{4}$ is a particular fixed element of $\Z_p$; denote it by $r$. Thus, $k-r = (m - \frac{3}{2})^2$, i.e., $(k-r)$ should be a quadratic residue modulo  $p$ for our hypothetical algorithm to work. If $(k-r)$ is not a quadratic residue modulo  $p$, then, since $p$ is of the form $p=4n+3$, it is known that $(r-k)$ should be a quadratic residue. Then we replace $g^{k}$ in the matrix $A$ by $g^r (g^{k})^{-1} = g^{r-k}$ and run our algorithm on this updated matrix as well.

Thus, we will have two similar algorithms running in parallel on two different matrices $A$: one with $g^{k}$ in the upper left corner, the other one with $g^{r-k}$. One of these algorithms will recover $m$, and therefore $k$ (or $(r-k)$), and this completes the proof.

\end{proof}

\subsection{Why we need the condition $det(H_1H_2)=0$}\label{det} \hfill

\noindent The following attack was suggested by Dan Brown.

Let $d=det(H_1H_2)$. Then, since
$A = M + H_1 M H_2 +....+  H_1^{m-1} M H_2^{m-1}$, one gets
$$(H_1 A H_2 + M - A) \cdot M^{-1} = H_1^m M H_2^m \cdot  M^{-1}.$$

The matrix on the left is public, so its determinant can be computed by the attacker. The determinant of the matrix on the right is $d^m$. Therefore, the attacker who can solve the discrete log problem in $\Z_p$ can recover $m$ from $d$ and $d^m$. This does not work with $d=0$ though, which is why we impose this condition. We actually impose a stronger condition $det(H_1)=det(H_2)=0$ to avoid attacks similar to linear algebra attacks \cite{Stickel} on Stickel's key exchange scheme \cite{Stickel_self}.

\subsection{Indistinguishability from random}\label{random}

We have run some tests to see if the matrix $K$ (the shared secret key) is indistinguishable from random. Figure 1 shows a histogram of values of the (1,1) entry of $K$, for a 200-bit $p$. Here values from 0 to $p$ are split into 10 groups (``bins") of size $\left \lfloor \frac{p}{10} \right \rfloor$ and the number of values in each bin, out of 100,000 trials, is recorded. The histogram shows essentially uniform distribution of values between the bins. Histograms for values of other entries of the matrix $K$ look the same, so $K$ passes at least this simple randomness test.

\begin{figure}[h!]
\vskip -0.5cm
\includegraphics[width=3in, height=3in]{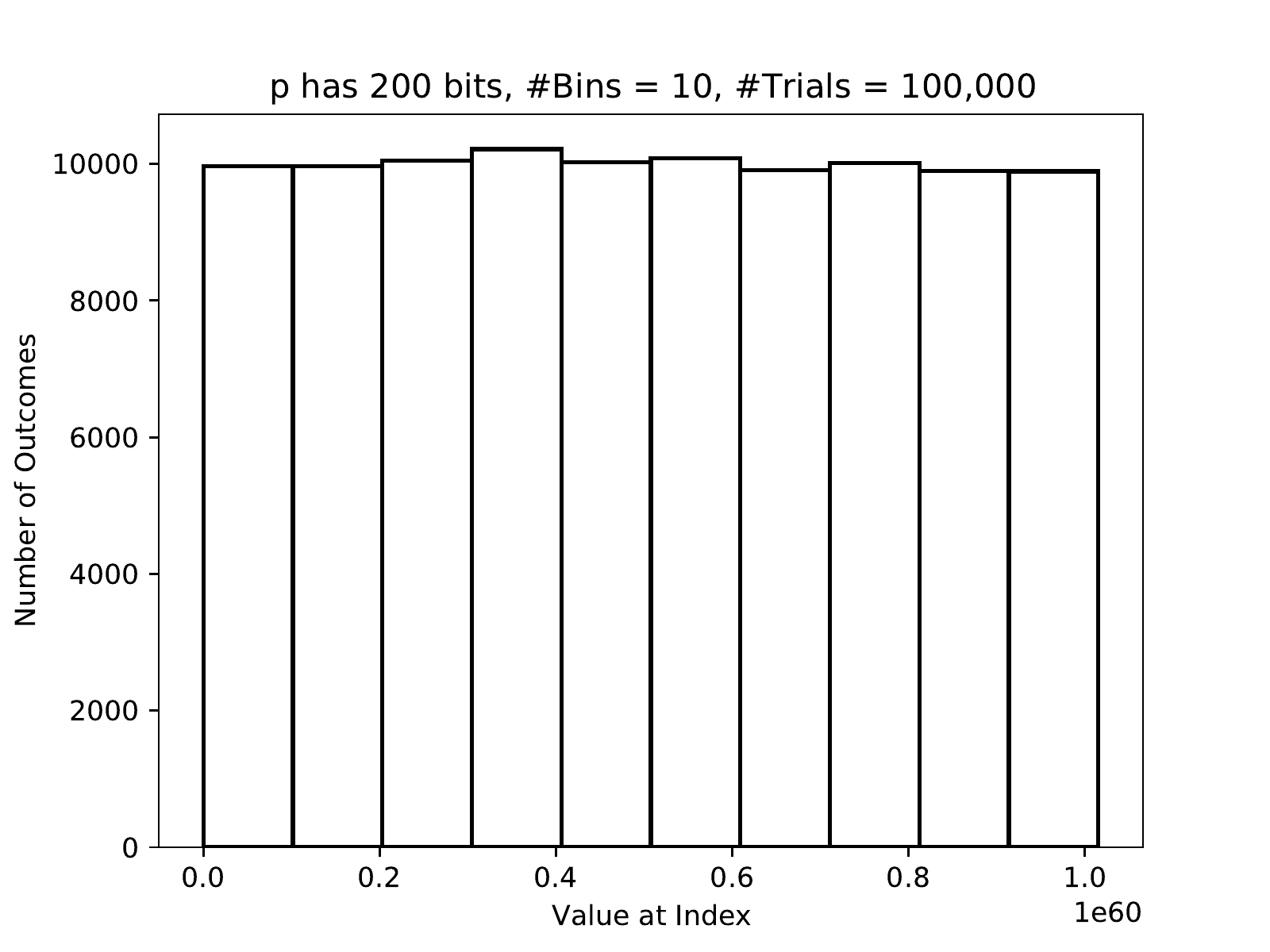}
\vskip -0.5cm \caption{Values distribution for the (1,1) entry of $K$} \label{bins}
\end{figure}

Another test we have run was computing the mean of the entries of $K$ in each single row and each single column, as well as the mean of {\it all} entries of the matrix $K$. Figure 2 below shows a histogram of the means of the entries of the first column, out of 100,000 trials. Figure 3 shows a histogram of the means of all entries of the matrix $K$. Again, values from 0 to $p$ are split into 10 groups (``bins") of size $\left \lfloor \frac{p}{10} \right \rfloor$. By the central limit theorem, if several random variables are independent and identically distributed, then their mean is approximately normally distributed. Thus, our Figure 2 and Figure 3 are, though  indirect, still an evidence of different entries of $K$ being independent.

\begin{figure}[h!]
%\vskip -0.5cm
\includegraphics[width=3in, height=3in]{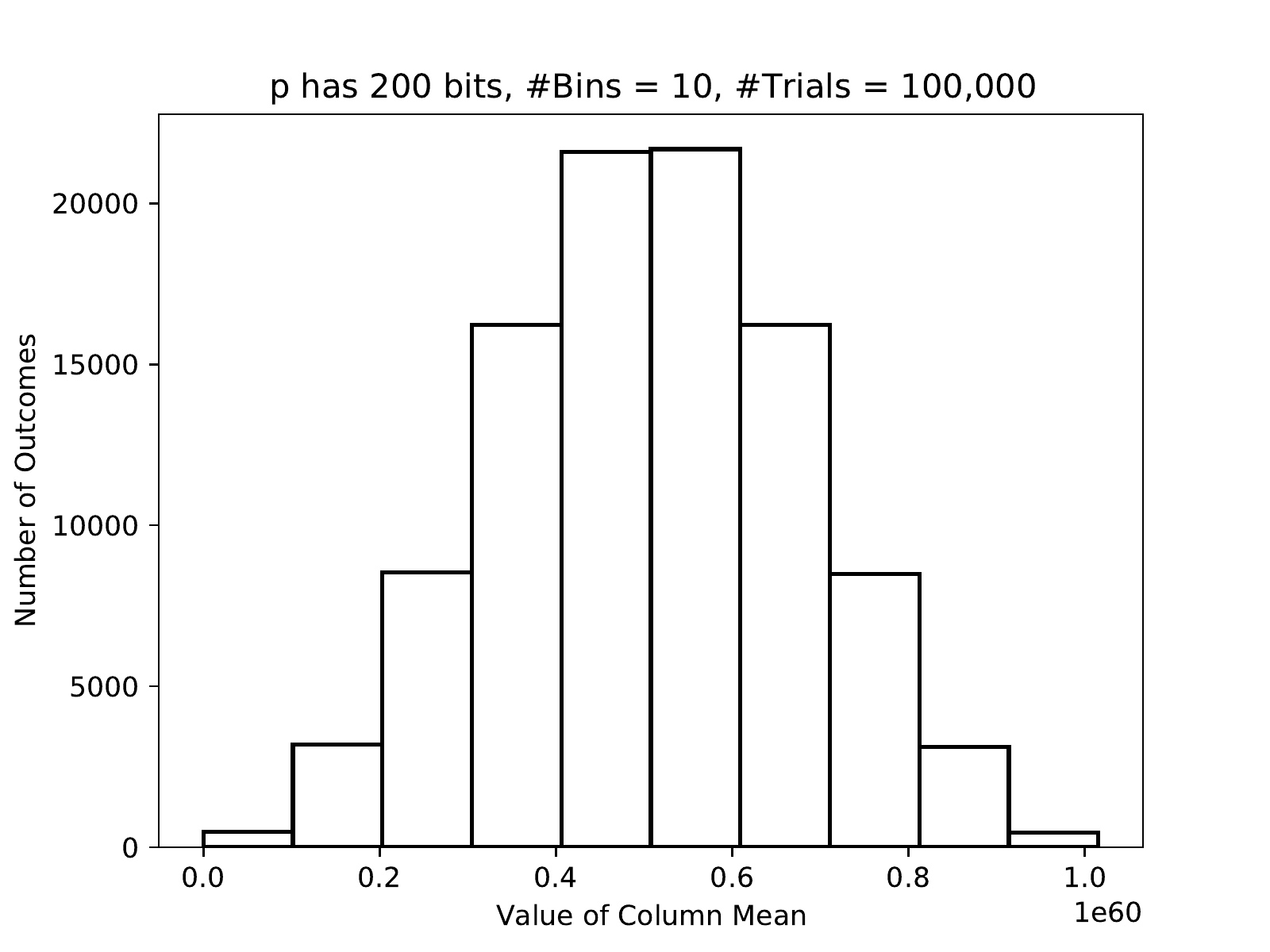}
\vskip -0.5cm \caption{Mean distribution of the first column entries of $K$} \label{bins} \hskip 0.1in \includegraphics[width=3in, height=3in]{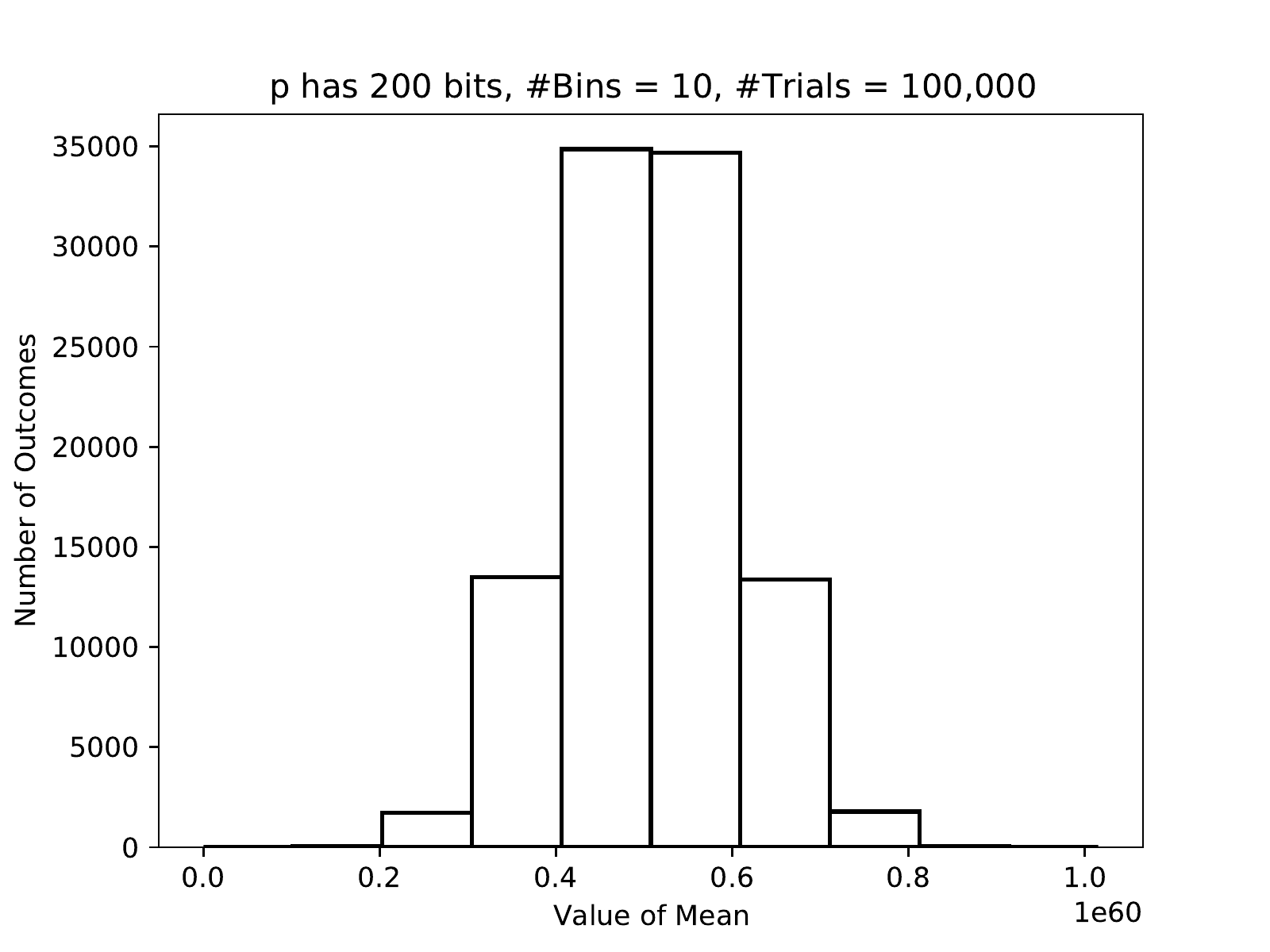} \vskip -0.5cm \caption{Mean distribution of all entries of $K$}
\end{figure}

Another evidence of the matrix $K$ being indistinguishable from  random is independent distribution of values of different entries of $K$. This is illustrated by Figure 4. It shows that joint distribution of values of a pair of different entries of $K$ is very close to uniform, i.e., for any two possible values $(x, y)$ of the entries in such a pair, the probability to occur is  $\frac{1}{p^2}= \frac{1}{p} \cdot \frac{1}{p}$, evidencing independence of $x$ and $y$. The ticks on the x-axis of the histogram in Figure 4 split $p$ possible values of the first entry in a pair into 10 bins corresponding to sets of values with an increment of $h = \left \lfloor \frac{p}{10} \right \rfloor$. Each of these 10 bins is split again into 10 bins of equal size to accommodate possible values of the second entry in a pair.

\begin{figure}[h!]
\vskip -0.2cm
\includegraphics[width=3in, height=3in]{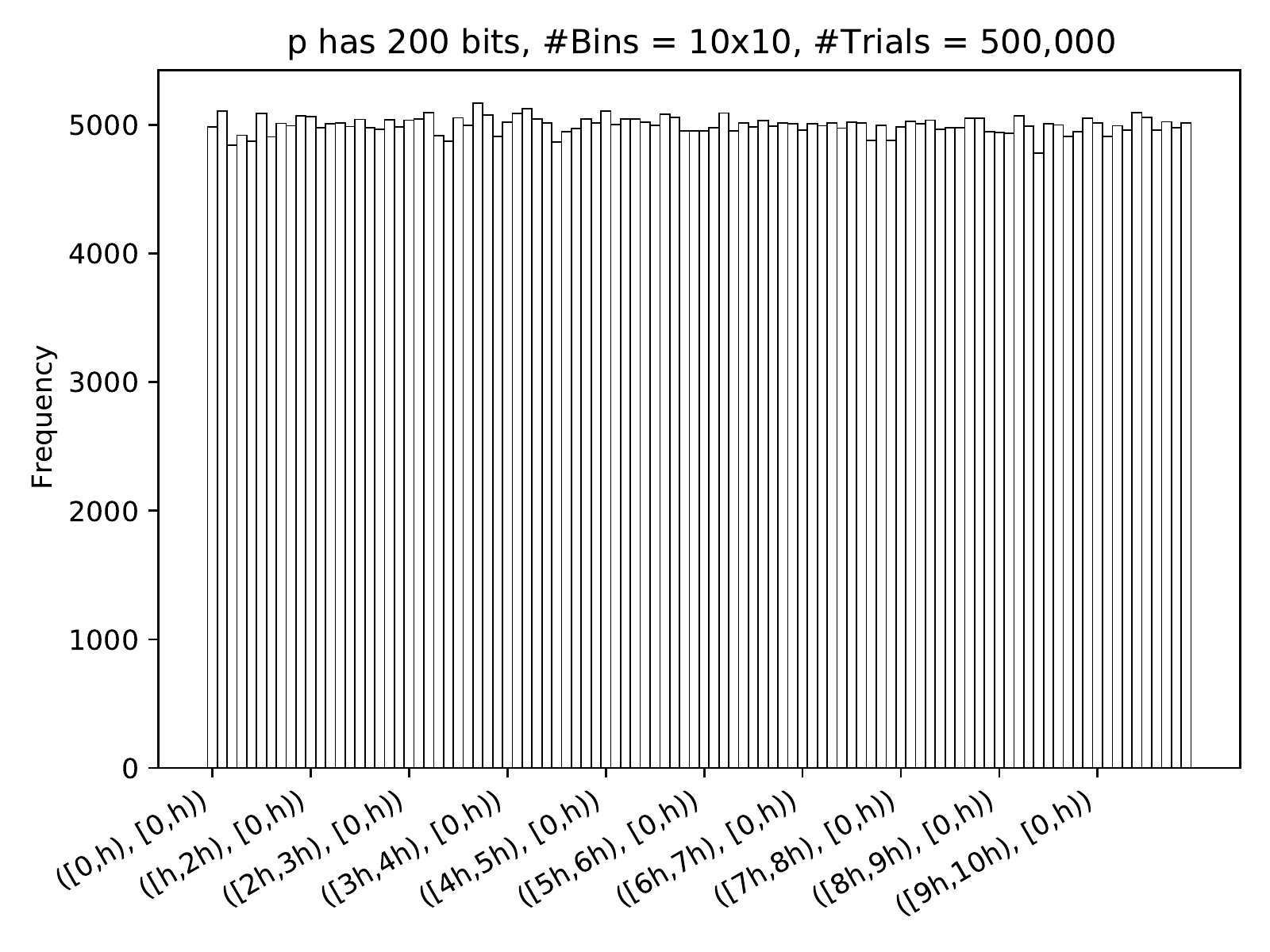}
\vskip -0.5cm \caption{Distribution of values of a pair of entries of $K$} \label{bins}
\end{figure}

\section{Implementation and performance}
\label{Implementation}

The scheme of this paper was implemented using Python. The code is available online, along with a challenge, see \cite{Python}.

We note that the formulas (\ref{A}) and (\ref{B}) may create an impression that to compute the matrices $A$
and $B$, Alice and Bob have to compute all powers of the matrices $H_1$ and $H_2$, from 1 to  $(m-1)$ or $(n-1)$. However, this is not the case; a large power of $(M, (H_1, H_2))$ can be computed with the usual square-and-multiply method, and this will produce $A$ or $B$ as the corresponding first component.

The number of multiplications in $\Z_p$ needed to compute $H_i^n$ is, of course, larger than that to compute $g^n$ for $g \in \Z_p$. To compute the square of a  $3 \times 3$ matrix, one needs 24 multiplications of elements of $\Z_p$, so one can expect our protocol to run at least 24 times slower than the classical Diffie-Hellman protocol. This is still fast enough to be practical; with a 2000-bit $p$ the run time of the protocol is about 2 sec on a very basic computer. Besides, given that at this time there are no visible approaches whatsoever (other  than brute force) to attack the protocol in this paper, the basic parameter $p$ can probably be taken smaller than what is recommended for standard Diffie-Hellman protocols,  and this will reduce the run time.

A particular 2000-bit safe prime $p$ that we used in our computer simulations was
10045850546888 50036334185776562243339025531704844369832736073099638458477395071158608659647532399390\\
27972338834707903941940188314348678981808910413754306718965087266944429878241410578991\\
73376250244281758576559881643143110828207143325627334593997352683778809319929255772120\\
45905540615043591215742223683070489198090104899809610177067292220347910171309250704268\\
93349814057145812995340991548906078333104951440614482037356443864699967124299012034397\\
81034231264233355059817445403969916571063605224058329470399818911447991765712527069708\\
6234200442489544474659560583354052797579309573507121265302226528942789519.

\section{Conclusions}

$\bullet$ We have offered  a key exchange protocol, resembling the classical Diffie-Hellman protocol, based on a semidirect product of two cyclic semigroups of matrices over $\Z_p$. One of the semigroups is additive, the other one multiplicative.

$\bullet$  In our protocol, no power of any matrix or any element of $\Z_p$ is ever exposed, so all standard attacks on Diffie-Hellman-like protocols are not applicable.

$\bullet$ Security assumption, analogous to the {\it computational Diffie-Hellman assumption}, is computational infeasibility of recovering the value of a  polynomial (of unknown degree) of a special form on three given matrices, from the values of two other polynomials (also of unknown degrees) on the same three matrices. A weaker security assumption is analogous to the discrete log assumption in  $\Z_p$ and is provably at least as hard.

\vskip 0.2cm

\noindent {\bf Acknowledgement.} We are grateful to Neal Koblitz who pointed out a ``diagonalization attack" on an earlier version of our protocol where the matrices $H_1$ and $H_2$ were equal. This attack reduces the problem of recovering a private exponent in our protocol (with $H_1=H_2=H$) to the standard discrete log problem in a field extension of $\F_p$. We are also grateful to Dan Brown for suggesting the attack in Section \ref{det}.

\baselineskip 11 pt

\end{document}